\begin{document}
\newcommand{\beq}{\begin{equation}}
\newcommand{\eneq}{\end{equation}}
\newtheorem{thm}{Theorem}[section]
\newtheorem{coro}[thm]{Corollary}
\newtheorem{lem}[thm]{Lemma}
\newtheorem{prop}[thm]{Proposition}
\newtheorem{defi}[thm]{Definition}
\newtheorem{rem}[thm]{Remark}
\newtheorem{cl}[thm]{Claim}
\title{A new approach for the strong unique continuation of electromagnetic
Schr\"{o}dinger operator with complex-valued coefficient}
\author{Xiaojun Lu$^{1}$\ \ \ \ \ \ \ Xiaofen Lv$^2$}
\thanks{Corresponding author: Xiaojun Lu, Department of Mathematics \& Jiangsu Key Laboratory of
Engineering Mechanics, Southeast University, 210096, Nanjing, China}
\thanks{AMS 2010 subject classification: 35J10, 35J25, 81V10}
\thanks{Key words and phrases: electromagnetic Schr\"{o}dinger operator, strong unique continuation, multiplier method}
\date{}
\maketitle
\pagestyle{fancy}                   
\lhead{X. Lu and X. Lv} \rhead{Unique continuation for
electromagnetic
Schr\"{o}dinger operator} 
\begin{center}
1. Department of Mathematics \& Jiangsu Key Laboratory of
Engineering Mechanics, Southeast University, 210096, Nanjing, China\\
2. Jiangsu Testing Center for Quality of Construction Engineering
Co., Ltd, 210028, Nanjing, China
\end{center}
\begin{abstract}
This paper mainly addresses the strong unique continuation property
for the electromagnetic Schr\"{o}dinger operator with complex-valued
coefficients. Appropriate multipliers with physical backgrounds have
been introduced to prove a priori estimates. Moreover, its
application in an exact controllability problem has been shown, in
which case, the boundary value determines the interior value
completely.
\end{abstract}
\renewcommand{\abstractname}{R\'{e}sum\'{e}}
\begin{abstract}
Dans cet article, on consid\`{e}re essentiellement la
propi\'{e}t\'{e} fortement unique pour l'op\'{e}rateur
\'{e}lectromagn\'{e}tique de Schr\"{o}dinger avec les
co\'{e}fficients de complexe. La m\'{e}thode de multiplicateur a
\'{e}t\'{e} introduite pour d\'{e}montrer les estimations \`{a}
priori. En plus, cette th\'{e}orie est appliqu\'{e}e dans le
probl\`{e}me de contr\^{o}labilit\'{e} exacte, o\`{u} la valeur sur
la fronti\`{e}re d\'{e}terminera la valeur int\'{e}rieure
compl\`{e}tement.
\end{abstract}

\section{Introduction}
Nowadays, quantum studies, especially multiphoton entanglement and
interferometry, are attracting many scientists' attention, either
theoretically or practically\cite{P}. A few world-famous high-tech
companies, such as Apple, Microsoft, etc. are developing new
generation of high-performance computers based on the quantum
mechanics
phenomena.\\

In our paper, we discuss an important complex-valued operator in
this research field. Let ${\bf A}(x)$ be the vector potential of the
magnetic field ${\bf B}$, that is, ${\bf B}=\nabla\times{\bf A}$.
Clearly, $ \nabla\cdot{\bf B}={\rm div}\ {\rm rot}{\bf A}=0.$ From
one of Maxwell's equations($\mu$ is magnetic permeability)
$\nabla\times{\bf E}=-\mu{\partial{\bf B}}/{\partial t}=0,$ we
deduce that ${\bf E}=-\nabla\phi,$ where the scalar $\phi$
represents the electric potential. We choose an appropriate
Lagrangian for the non-relativistic  charged particle in the
electromagnetic field ($q$ is the electric charge of the particle,
and ${\bf v}$ is its velocity, $m$ is mass), $\mathscr{L}={m{\bf
v}^2}/2-q\phi+q{\bf v}\cdot{\bf A}.$ Particularly, the canonical
momentum is specified by the vector ${\bf p}=\nabla_{\bf
v}\mathscr{L}=m{\bf v}+q{\bf A}.$ Next, we define the classical
Hamiltonian by Legendre transform, $H\triangleq{\bf p}\cdot{\bf
v}-\mathscr{L}={({\bf p}-q{\bf A})^2}/{(2m)}+q\phi.$ In quantum
mechanics, when ${\bf p}$ is replaced by $-i\hbar\nabla$,($\hbar$ is
the Planck constant), we have the following operator
\begin{equation}P\triangleq{(i\hbar\nabla+q{\bf A})^2}/{(2m)}+q\phi: \mathcal{H}\to\mathcal{H}^\ast,\end{equation}
where $\mathcal{H}$ and $\mathcal{H}^\ast$ are corresponding
function spaces. Lots of literature is devoted to the research of
this kind of
operator\cite{LF,MG,Koch,MT}.\\

Let $\Omega\subset\mathbb{R}^N$ be an open, connected and bounded
domain. From the structure of operator $P$, we define the
corresponding simplified operators
\begin{equation}\mathscr{H}_{\bf A}\triangleq i\nabla+{\bf A}(x):
L^2(\Omega)\to (L^2(\Omega))^N,\end{equation}
\begin{equation}\mathscr{H}_{\bf A}^2\triangleq(i\nabla+{\bf A}(x))^2:
L^2(\Omega)\to L^2(\Omega),\end{equation} where ${\bf A}\in
C^1(\overline{\Omega})$ is a real-valued potential vector. The
corresponding derivative of the magnetic potential ${\bf A}$ is as
follows,
$$D{\bf A}=\left(\begin{array}{cccc}\nabla_{1}a_1&\nabla_{1}a_2,&\cdots&\nabla_{1}a_N\\
\nabla_{2}a_1&\nabla_{2}a_2&\cdots&\nabla_{2}a_N\\
\vdots&\vdots&\cdots&\vdots\\
\nabla_{N}a_1&\nabla_{N}a_2&\cdots&\nabla_{N}a_N\\
\end{array}\right),$$
where $$\nabla_{i}a_j\triangleq{\partial a_j}/{\partial x_i},
i,j=1,\cdots,N.$$ In addition, one defines the following $N\times N$
anti-symmetric matrix $\Xi_{\bf A}$ given by
$$\Xi_{\bf A}\triangleq (D{\bf A})^T-D{\bf A}^T=\left(\begin{array}{cccc}\xi_{11}&\xi_{12},&\cdots&\xi_{1N}\\
\xi_{21}&\xi_{22}&\cdots&\xi_{2N}\\
\vdots&\vdots&\cdots&\vdots\\
\xi_{N1}&\xi_{N2}&\cdots&\xi_{NN}\\
\end{array}\right) $$
with
$$\xi_{jk}\triangleq \nabla_j a_k-\nabla_k a_j,\ \ k,j=1,\cdots,N.$$
In quantum mechanics, $\Xi_{\bf A}\equiv0$ stands for the case
without magnetic field, i.e. $${\bf B}={\rm rot}{\bf A}=0.$$ Once
the magnetic field exists, then $\Xi_{\bf A}\neq0$. Consequently,
$\Xi_{\bf A}$ serves as a test matrix for the magnetic field.
Interested readers can refer to \cite{LF1,LU1,NM} for more details
concerned with the vector operator $\mathscr{H}_{\bf A}$ and
self-adjoint operator $\mathscr{H}_{\bf A}^2$. In such a manner, (1)
is simplified as
\begin{equation}
\mathscr{H}_{\bf A}^2-\phi(x): L^2(\Omega)\to L^2(\Omega),
\end{equation}
where the complex-valued function $\phi\in L^\infty(\Omega)$. In
this paper, we focus on the strong unique continuation
property(SUCP) for the electromagnetic Schr\"{o}dinger operator (4).
In the following, we introduce a few important definitions.
\begin{defi} A function $u\in L^2_{loc}(\Omega)$ is said to
vanish of infinite order at $x_0\in\Omega$ if for any sufficiently
small $R>0$, one has
\begin{equation}
\int_{|x-x_0|<R}|u|^2 dx=O(R^M), \ \text{for\ every}\
M\in\mathbb{N}^+.
\end{equation}
\end{defi}
\begin{defi} We say that the operator (4) has SUCP if every solution $\omega$ of the equation
$$\mathscr{H}_{\bf A}^2\omega=\phi\omega,$$ which vanishes of infinite
order at $x_0$ is identically zero in a neighborhood of
$x_0$.\end{defi} So far, the strong unique continuation problem for
second order elliptic operators is well-understood. In the case of
$\Omega=\mathbb{R}^2$, Carleman proved the SUCP of the elliptic
equation with bounded coefficients and $V\in
L_{loc}^\infty(\mathbb{R}^2)$
\begin{equation}-\Delta u=W\cdot\nabla u+Vu\end{equation} by introducing a
weighted $L^2$-estimate, the so-called Carleman estimate \cite{TC}.
For the space dimension $N\geq 3$ with bounded coefficients, N.
Aronszajn, A. Krzywicki and J. Szarski proved the SUCP by means of
Carleman type inequalities, namely, observability inequalities.
Afterwards, D. Jerison, C. E. Kenig, C. D. Sogge treated the
equation (6) with singular potentials $V\in
L^{{N}/{2}}_{loc}(\mathbb{R}^N)$ and $W\in L^\infty(\mathbb{R}^N)$,
$N\geq 3$, by the approach of $L^p-L^q$ Carleman estimate involving
sharp exponents\cite{Kenig,Koch,Sogge}. Afterwards, N. Garofalo and
F. H. Lin gave a new proof for the SUCP of the elliptic operator
$-\Delta u=Vu$ with bounded potential by
applying a variational method in \cite{Lin2}.\\

There is a large body of work on SUCP for (6) with real-valued
coefficients. In this paper, we investigate the complex-valued case.
As a matter of fact, the operator $\mathscr{H}_{\bf A}^2$ can be
decomposed into
\begin{equation} \mathscr{H}_{\bf A}^2\omega=-\Delta\omega+i{\bf
A}\cdot\nabla\omega+i\nabla\cdot({\bf A}\omega)+{\bf A}{\bf
A}^T\omega.
\end{equation}
In \cite{KK1,KK2}, K. Kurata proved the SUCP for (4) with ${\bf
A}{\bf A}^T\in \mathscr{K}^{loc}_N(\Omega)$, where
$\mathscr{K}^{loc}_N(\Omega)$ denotes the Kato class. When the
potential ${\bf A}\in (L^\infty(\Omega))^N$, in effect, it does not
belong to the Kato class. As a result, we can not deduce
corresponding results directly from K. Kurata's work. In this
manuscript, we intend to provide a new approach of SUCP for (4) with
complex-valued coefficients by developing new multipliers. At the
moment one is ready to state the main results.
\begin{thm}
For $N\geq2$, let the complex-valued $\omega\in H^2(\mathbb{B}_1)$
be a solution of the problem \begin{equation}-\Delta\omega+i{\bf
A}\cdot\nabla\omega+i\nabla\cdot({\bf A}\omega)+{\bf A}{\bf
A}^T\omega=\phi(x)\omega \ \ \text{in}\ \mathbb{B}_1,\end{equation}
where $\mathbb{B}_1$ is a unit ball
$\mathbb{B}_1\subset\overline{\Omega}$, ${\bf A}\in
C^1(\overline{\Omega})$ is a real-valued potential vector and the
complex-valued function $\phi\in L^\infty(\mathbb{R}^N)$. If
$\omega$ vanishes of infinite order at $x_0\in\mathbb{B}_1$, then
$\omega\equiv0$ in $\mathbb{B}_1$.
\end{thm}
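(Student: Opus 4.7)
The plan is to adapt the Garofalo--Lin frequency function method to the complex-valued magnetic setting, treating the full magnetic gradient $\mathscr{H}_{\mathbf{A}}\omega=i\nabla\omega+\mathbf{A}\omega$ as the natural ``gradient'' for the problem. Without loss of generality assume $x_{0}=0$. I will introduce the height $H(r)=\int_{\partial B_{r}}|\omega|^{2}\,d\sigma$, the magnetic Dirichlet energy $D(r)=\int_{B_{r}}|\mathscr{H}_{\mathbf{A}}\omega|^{2}\,dx$, and the frequency function $N(r)=rD(r)/H(r)$. The aim is to prove an almost-monotonicity for $N(r)$ which yields a doubling estimate $H(2r)\leq C\,H(r)$; combined with the infinite-order vanishing $H(r)=O(r^{M})$, this forces $\omega\equiv0$ on a small ball, and a standard connectedness argument then propagates $\omega\equiv0$ to all of $\mathbb{B}_{1}$.

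Two multiplier identities of physical origin will drive the analysis. \emph{First multiplier} $\overline{\omega}$: since $\mathscr{H}_{\mathbf{A}}^{2}$ is formally self-adjoint for real-valued $\mathbf{A}$, integration by parts on $B_{r}$ will produce
\[ D(r)=\mathrm{Re}\!\int_{B_{r}}\phi|\omega|^{2}\,dx+\mathrm{Re}\!\int_{\partial B_{r}}\overline{\omega}\,\bigl(\nu\cdot\overline{\mathscr{H}_{\mathbf{A}}\omega}\bigr)\,d\sigma, \]
so that $D(r)$ is controlled by boundary data up to an $L^{\infty}$-perturbation by $\phi$. \emph{Second multiplier} $x\cdot\overline{\mathscr{H}_{\mathbf{A}}\omega}$: this Pohozaev-type quantity, expanded using the commutator identity $[\mathscr{H}_{\mathbf{A},j},\mathscr{H}_{\mathbf{A},k}]=i\xi_{jk}$, produces a Rellich-type identity expressing $D(r)$ and $rD(r)$ in terms of the boundary radial flux $\int_{\partial B_{r}}|\nu\cdot\mathscr{H}_{\mathbf{A}}\omega|^{2}\,d\sigma$, plus a bulk correction involving $\Xi_{\mathbf{A}}$. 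Differentiating $N(r)$ and invoking Cauchy--Schwarz then yields a one-sided differential inequality of the form $(\log N(r))'\geq -C$, from which $N(r)\leq C_{0}$ for $r\in(0,r_{0}]$.

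Once $N$ is bounded, the standard computation gives $\bigl(r^{-2C_{0}}H(r)\bigr)'\geq0$, which integrates to $H(2r)\leq 2^{2C_{0}}H(r)$. The infinite-order vanishing assumption $H(r)=O(r^{M})$ then immediately contradicts doubling unless $H$ vanishes identically on some ball $B_{r_{0}}$, in which case $\omega\equiv0$ there. The resulting local vanishing is then propagated to the whole of $\mathbb{B}_{1}$ by covering $\mathbb{B}_{1}$ with overlapping balls and reapplying the same argument at boundary points of the vanishing set, using connectedness of $\mathbb{B}_{1}$.

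The main obstacle will be the simultaneous treatment of the complex-valued potential $\phi$ and the magnetic cross terms. Unlike the real case, the multiplier identities split into independent real and imaginary parts, and only a carefully chosen linear combination will produce the positive-definite bulk quantity $D(r)$; the imaginary parts must be estimated by Cauchy--Schwarz with absorption into $D(r)$. A second delicate point is the appearance of $\Xi_{\mathbf{A}}$ through the commutator: since $\mathbf{A}\in C^{1}$ only, the entries $\xi_{jk}$ are bounded but not small, and they must be absorbed as lower-order perturbations via a diamagnetic Hardy-type inequality $\int_{B_{r}}|\omega|^{2}/|x|^{2}\,dx\leq C\int_{B_{r}}|\mathscr{H}_{\mathbf{A}}\omega|^{2}\,dx$. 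The whole argument will therefore work only on a sufficiently small ball $B_{r_{0}}$ where these perturbations remain subordinate to the leading frequency behaviour.
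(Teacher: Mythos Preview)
Your strategy is essentially the paper's: a Garofalo--Lin frequency function built from the magnetic Dirichlet energy, differentiated via the two multipliers $\overline{\omega}$ and $x\cdot\overline{\mathscr{H}_{\mathbf{A}}\omega}$, leading to almost-monotonicity and then a doubling estimate. Two technical points where the paper differs deserve comment.

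First, the paper's energy is $\Psi(r)=\int_{B_r}\bigl(|\mathscr{H}_{\mathbf{A}}\omega|^2-\phi^R|\omega|^2\bigr)\,dV$ rather than your $D(r)$; absorbing $\phi^R$ into the energy yields the exact relation $\Phi'(r)=(N-1)\Phi(r)/r+2\Psi(r)$, which is what makes the Cauchy--Schwarz cancellation in $\digamma'(r)$ clean (the boundary term from the first multiplier identity matches $\Psi$ on the nose, not just up to a $\phi$-error).

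Second, and more substantively, the paper does \emph{not} use a Hardy inequality to absorb the $\Xi_{\mathbf{A}}$ and $\phi$ perturbations. Instead it proves a comparison lemma $\int_{B_r}|\omega|^2\,dV\le r\int_{\partial B_r}|\omega|^2\,dS$ (obtained via the auxiliary multiplier $r^2-|x|^2$), and then restricts the almost-monotonicity argument to the set $\{r:\digamma(r)>1\}$, on which $\int_{B_r}|\omega|^2\le r\Phi(r)\le r^2\Psi(r)$; on the complementary set $\digamma$ is trivially bounded by $1$. This is the step that needs repair in your outline: the diamagnetic Hardy inequality $\int_{B_r}|\omega|^2/|x|^2\,dx\le C\int_{B_r}|\mathscr{H}_{\mathbf{A}}\omega|^2\,dx$ you invoke is false in dimension $N=2$ (the Hardy constant $(N-2)^2/4$ vanishes), so your absorption mechanism for the lower-order terms breaks down there. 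The paper's comparison-lemma route, together with the explicit split $\digamma>1$ versus $\digamma\le1$, avoids Hardy entirely and works uniformly for all $N\ge2$.
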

By virtue of Theorem 1.3, one is able to prove the following
statement for a mixed boundary value problem which is of great
importance in the discussion of exact controllability through
boundary control \cite{LU1}.
\begin{coro} Assume that $\Omega$ is a bounded,
open and connected domain in $\mathbb{R}^N$ with the boundary
$\Gamma\in C^2$, ${\bf A}\in C^1(\overline{\Omega})$ is a
real-valued potential vector and the complex-valued function
$\phi\in L^\infty(\mathbb{R}^N)$. Let $\omega\in H^2(\Omega)$ be the
solution of the mixed boundary problem
$$-\Delta\omega+i{\bf A}\cdot\nabla\omega+i\nabla\cdot({\bf
A}\omega)+{\bf A}{\bf A}^T\omega=\phi(x)\omega\ \ \ \text{in}\
\Omega,$$
$$\omega={\partial\omega}/{\partial\nu}=0\ \text{on}\ \Gamma.
$$
Then $\omega$ is identically 0 in $\Omega$.
\end{coro}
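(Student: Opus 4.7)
The plan is to reduce Corollary 1.4 to Theorem 1.3 by extending $\omega$ by zero outside $\Omega$ and then propagating vanishing through $\Omega$ via connectedness. First, I would set $\tilde{\omega}:=\omega$ on $\overline{\Omega}$ and $\tilde{\omega}:=0$ on $\mathbb{R}^N\setminus\overline{\Omega}$. Because $\Gamma\in C^2$ and both Cauchy data $\omega|_\Gamma$ and $\partial\omega/\partial\nu|_\Gamma$ vanish, the traces of $\tilde\omega$ computed from each side of $\Gamma$ agree up to the first normal derivative, so $\tilde\omega\in H^2_{loc}(\mathbb{R}^N)$. Since $\mathbf{A}\in C^1(\overline{\Omega})$, I would extend it to a vector field $\tilde{\mathbf{A}}\in C^1(U)$ on an open neighborhood $U$ of $\overline{\Omega}$ (using the standard $C^1$ extension available for $C^2$-boundaries), while $\phi$ is already defined on all of $\mathbb{R}^N$. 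Testing against $\varphi\in C_c^\infty(U)$ and applying Green's identities on $\Omega$ and on $U\setminus\overline{\Omega}$ separately, the boundary terms on $\Gamma$ all multiply $\omega|_\Gamma$ or $\partial\omega/\partial\nu|_\Gamma$, hence vanish; this shows that $\tilde\omega$ satisfies the same equation $-\Delta\tilde\omega+i\tilde{\mathbf{A}}\cdot\nabla\tilde\omega+i\nabla\cdot(\tilde{\mathbf{A}}\tilde\omega)+\tilde{\mathbf{A}}\tilde{\mathbf{A}}^T\tilde\omega=\phi(x)\tilde\omega$ in $U$.

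Second, I would pick any boundary point $y\in\Gamma$ together with an exterior point $z\in U\setminus\overline{\Omega}$ close to $y$, and a radius $r>0$ small enough that $\overline{B_r(z)}\subset U$ while $B_r(z)$ still meets $\Omega$. On the nonempty open set $B_r(z)\setminus\overline{\Omega}$ the extension $\tilde\omega$ is identically zero, so in particular it vanishes of infinite order at $z$. After the affine change of variables $x\mapsto z+rx$, which preserves the structural form of the magnetic Schrödinger equation (producing a rescaled $C^1$-potential $r\tilde{\mathbf{A}}(z+rx)$ and rescaled $L^\infty$-term $r^2\phi(z+rx)$), Theorem 1.3 yields $\tilde\omega\equiv 0$ on $B_r(z)$. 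Letting $y$ range over $\Gamma$, I obtain $\omega\equiv 0$ on an interior one-sided tubular neighborhood of $\Gamma$ within $\Omega$.

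Third, I would propagate vanishing into the bulk of $\Omega$ by a connectedness argument. Define $Z:=\{x\in\Omega:\omega\equiv 0\text{ in some open neighborhood of }x\}$. The set $Z$ is open by construction and nonempty by the previous step. If $Z\neq\Omega$, the connectedness of $\Omega$ yields a point $x^{*}\in\Omega\cap\partial Z$; every neighborhood of $x^{*}$ contains a nonempty open subset on which $\omega$ is zero, so $\omega$ vanishes of infinite order at $x^{*}$. Taking a ball $B_\rho(x^{*})\subset\Omega$ and applying Theorem 1.3 (again after the rescaling to $\mathbb{B}_1$) gives $\omega\equiv 0$ on $B_\rho(x^{*})$, placing $x^{*}\in Z$ and contradicting $x^{*}\in\partial Z$. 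Hence $Z=\Omega$ and $\omega\equiv 0$ in $\Omega$.

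The main technical obstacle will be verifying Step~1, namely that the zero-extension $\tilde\omega$ genuinely solves the equation in the distributional sense across $\Gamma$, without hidden surface-distribution terms on $\Gamma$. The delicacy is that the magnetic operator contains $i\nabla\cdot(\mathbf{A}\omega)$, whose integration by parts produces a boundary integral involving $\mathbf{A}\cdot\nu\,\omega$, and $-\Delta$ produces boundary integrals involving both $\omega$ and $\partial\omega/\partial\nu$; all of these terms vanish precisely because of the two prescribed homogeneous boundary conditions, so the hypothesis $\omega=\partial\omega/\partial\nu=0$ on $\Gamma$ is exactly what is needed, and after this is in place the subsequent application of Theorem 1.3 together with the connectedness propagation is routine.
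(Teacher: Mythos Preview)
Your proposal is correct and follows essentially the same route as the paper: extend $\omega$ by zero across $\Gamma$, use the vanishing Cauchy data to check that the extension stays in $H^2$ and still solves the equation, then invoke Theorem~1.3 and conclude by connectedness. The paper carries this out locally on a single ball $\mathbb{B}$ straddling $\Gamma$ and focuses its detail on the $H^2$ verification via integration by parts, whereas you extend globally and spell out the auxiliary steps (extending $\mathbf{A}$, rescaling to $\mathbb{B}_1$, the open--closed propagation) that the paper leaves implicit; these are cosmetic rather than substantive differences.
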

\begin{rem} Theorem 1.3 demonstrates, the asymptotic behavior of the solution
$\omega$ at an interior point $x_0$ determines the interior value of
$\omega$ in $\mathbb{B}_1$. In contrast with Theorem 1.3, Corollary
1.4 indicates, the behavior of solution $\omega$ on the boundary
determines the interior value of $\omega$ in $\Omega$.
\end{rem}
The rest of the paper is organized as follows. First and foremost,
in Section 2, we introduce some useful quantities and their
particular properties. Next, we give an important comparison lemma
and a frequency function. By carefully estimating the derivative of
the frequency function, we reach the conclusion in the final
analysis. In Section 3, as an important application in exact
controllability, we prove Corollary 1.4 in detail.
\section{Proof of the main theorem: A new multiplier method}
First, we introduce several quantities which will serve as useful
tools for our purposes. For every $r\in(0,1)$, we define the
following two quantities
\begin{equation}\Phi(r)\triangleq\int_{\partial\mathbb{B}_r}|\omega|^2dS_x,\end{equation}
where $\mathbb{B}_r$ is centered at the origin with radius $r$,
$\partial\mathbb{B}_r$ denotes its sphere, $dS_x$ stands for the
($N$-1)-dimensional Hausdorff measure on the sphere
$\partial\mathbb{B}_r$.
\begin{equation}\Psi(r)\triangleq\int_{\mathbb{B}_r}(|\mathscr{H}_{\bf
A}\omega|^2-\phi^R|\omega|^2)dV_x,\end{equation} where $\phi^R$
denotes the real part of $\phi$. Actually, we have
\begin{lem} By virtue of divergence theorem, the
following identity holds,
\begin{equation}-{\rm Re}\int_{\partial\mathbb{B}_r}\Big(\nabla|\omega|^2-i{\bf
A}|\omega|^2\Big)\cdot
x/rdS_x=\int_{\mathbb{B}_r}\Big(-2|\mathscr{H}_{\bf
A}\omega|^2+2\phi^R|\omega|^2\Big)dV_x.\end{equation}
\end{lem}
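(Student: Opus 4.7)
The plan is to prove this identity by taking real parts to eliminate the imaginary contribution on the left, applying the divergence theorem to reduce the boundary integral to a volume integral of $\Delta|\omega|^2$, and then matching $\Delta|\omega|^2$ against $2|\mathscr{H}_{\bf A}\omega|^2-2\phi^R|\omega|^2$ via the PDE.

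First, I would observe that because ${\bf A}$ is real-valued and $|\omega|^2$ is real, the vector $\nabla|\omega|^2-i{\bf A}|\omega|^2$ has real part $\nabla|\omega|^2$ and imaginary part $-{\bf A}|\omega|^2$. Hence taking real parts on the left reduces the expression to
\[
-\int_{\partial\mathbb{B}_r}\nabla|\omega|^2\cdot(x/r)\,dS_x,
\]
that is, minus the flux of $\nabla|\omega|^2$ through $\partial\mathbb{B}_r$. Since $\omega\in H^2(\mathbb{B}_1)$, the divergence theorem applies and gives $-\int_{\mathbb{B}_r}\Delta|\omega|^2\,dV_x$.

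Next, I would expand $\Delta|\omega|^2=2|\nabla\omega|^2+2\,\mathrm{Re}(\bar\omega\Delta\omega)$. To compute the second term, I solve the PDE (8) for $\Delta\omega$, obtaining
\[
\Delta\omega=2i{\bf A}\cdot\nabla\omega+i(\nabla\cdot{\bf A})\omega+|{\bf A}|^2\omega-\phi\omega.
\]
Multiplying by $\bar\omega$ and taking real parts kills the purely imaginary term $i(\nabla\cdot{\bf A})|\omega|^2$, and uses $\mathrm{Re}(2i{\bf A}\cdot\bar\omega\nabla\omega)=-2{\bf A}\cdot\mathrm{Im}(\bar\omega\nabla\omega)$, yielding
\[
\mathrm{Re}(\bar\omega\Delta\omega)=-2{\bf A}\cdot\mathrm{Im}(\bar\omega\nabla\omega)+|{\bf A}|^2|\omega|^2-\phi^R|\omega|^2.
\]

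In parallel, I would expand $|\mathscr{H}_{\bf A}\omega|^2=(i\nabla\omega+{\bf A}\omega)\cdot\overline{(i\nabla\omega+{\bf A}\omega)}$, arriving at $|\nabla\omega|^2+|{\bf A}|^2|\omega|^2-2{\bf A}\cdot\mathrm{Im}(\bar\omega\nabla\omega)$. Combining this with the identity for $\mathrm{Re}(\bar\omega\Delta\omega)$ gives the clean cancellation
\[
\Delta|\omega|^2=2|\mathscr{H}_{\bf A}\omega|^2-2\phi^R|\omega|^2,
\]
and integrating over $\mathbb{B}_r$ delivers precisely the right-hand side of (11). The only mildly delicate point is bookkeeping the real/imaginary parts in the complex-valued setting so that cross-terms involving ${\bf A}\cdot\mathrm{Im}(\bar\omega\nabla\omega)$ cancel exactly; everything else is divergence theorem plus the equation. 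No obstruction beyond this careful algebraic matching is expected.
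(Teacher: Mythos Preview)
Your argument is correct. You take the real part first, which immediately drops the $i{\bf A}|\omega|^2$ term on the boundary (since ${\bf A}$ and $|\omega|^2$ are real), and then reduce everything to the pointwise identity $\Delta|\omega|^2=2|\mathscr{H}_{\bf A}\omega|^2-2\phi^R|\omega|^2$, which you verify cleanly from the PDE and the expansion of $|\mathscr{H}_{\bf A}\omega|^2$. The bookkeeping of the cross term $-2{\bf A}\cdot\mathrm{Im}(\bar\omega\nabla\omega)$ is handled correctly on both sides.

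The paper takes a somewhat different route: it computes $\int_{\mathbb{B}_r}\mathscr{H}_{\bf A}^2|\omega|^2\,dV_x$ in two ways---once by integrating by parts to produce the boundary term plus a volume remainder, and once by expanding $\mathscr{H}_{\bf A}^2|\omega|^2$ directly as $-\bar\omega\Delta\omega-\omega\Delta\bar\omega-2|\nabla\omega|^2+\cdots$---and then equates the two expressions, taking real parts only at the end. Your approach is more elementary and transparent: by taking $\mathrm{Re}$ at the outset you avoid carrying the operator $\mathscr{H}_{\bf A}^2$ applied to $|\omega|^2$ through the computation and work directly with the Laplacian. The paper's version, by contrast, keeps the calculation in the $\mathscr{H}_{\bf A}$-formalism throughout, which is consistent with its multiplier-method presentation but makes the cancellations less immediate.
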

\begin{proof}
On the one hand, \beq\begin{array}{lll}
\int_{\mathbb{B}_r}\mathscr{H}_{\bf A}^2|\omega|^2dV_x&=&-\int_{\partial\mathbb{B}_r}\Big(\nabla|\omega|^2-i{\bf A}\cdot|\omega|^2\Big)\cdot x/rdS_x+\int_{\mathbb{B}_r}{\bf A}\cdot\mathscr{H}_{\bf A}|\omega|^2dV_x\\
\\
&=&-\int_{\partial\mathbb{B}_r}\Big(\nabla|\omega|^2-i{\bf A}|\omega|^2\Big)\cdot x/rdS_x\\
\\
&&+\int_{\mathbb{B}_r}{\bf
A}\cdot(i\overline{\omega}\nabla\omega+i\omega\nabla\overline{\omega}+{\bf
A}|\omega|^2)dV_x.
\end{array}\eneq
On the other hand, \beq\begin{array}{lll}
\int_{\mathbb{B}_r}\mathscr{H}_{\bf A}^2|\omega|^2dV_x&=&\int_{\mathbb{B}_r}\Big(-\overline{\omega}\Delta\omega-\omega\Delta\overline{\omega}-2|\nabla\omega|^2+i\nabla\cdot{\bf A}|\omega|^2\Big)dV_x\\
\\
&&+\int_{\mathbb{B}_r}\Big(i\overline{\omega}{\bf
A}\cdot\nabla\omega+i\omega{\bf A}\cdot\nabla\overline{\omega}+{\bf
A}{\bf A}^T|\omega|^2\Big)dV_x.
\end{array}\eneq
Since $$\nabla\cdot{\bf
A}|\omega|^2=\overline{\omega}\nabla\cdot{\bf A}\omega+\omega{\bf
A}\cdot\nabla\overline{\omega}=\overline{\overline{\omega}\nabla\cdot{\bf
A}\omega+\omega{\bf A}\cdot\nabla\overline{\omega}},$$ then by
combining (12) and (13), we have $$\begin{array}{lll}
&&-{\rm Re}\int_{\partial\mathbb{B}_r}\Big(\nabla|\omega|^2-i{\bf A}|\omega|^2\Big)\cdot x/rdS_x\\
\\
&=&\int_{\mathbb{B}_r}\Big(-\overline{\omega}\Delta\omega-\omega\Delta\overline{\omega}-2|\nabla\omega|^2\Big)dV_x\\
\\
&=&\int_{\mathbb{B}_r}\Big(-2|\nabla\omega|^2+2i\omega{\bf A}\cdot\nabla\overline{\omega}-2i\overline{\omega}{\bf A}\cdot\nabla\omega-2{\bf A}{\bf A}^T|\omega|^2\Big)dV_x\\
\\
&&+\int_{\mathbb{B}_r}\Big(-\overline{\omega}\Delta\omega+i\overline{\omega}{\bf A}\cdot\nabla\omega+i\overline{\omega}\nabla\cdot{\bf A}\omega+{\bf A}{\bf A}^T|\omega|^2\Big)dV_x\\
\\
&&+\int_{\mathbb{B}_r}\Big(-\omega\Delta\overline{\omega}-i\omega{\bf A}\cdot\nabla\overline{\omega}-i\omega\nabla\cdot{\bf A}\overline{\omega}+{\bf A}{\bf A}^T|\omega|^2\Big)dV_x\\
\\
&=&\int_{\mathbb{B}_r}\Big(-2|\mathscr{H}_{\bf
A}\omega|^2+2\phi^R|\omega|^2\Big)dV_x.
\end{array} $$
\end{proof}
Next we calculate the derivatives of $\Phi(r)$ and $\Psi(r)$ with
respect to $r$.
\begin{lem}  The derivatives of $\Phi(r)$ and $\Psi(r)$ with respect to $r$
are presented as follows,
\begin{equation}
\Phi'(r)=(N-1)\Phi(r)/r+2\Psi(r).
\end{equation}
\begin{equation}\begin{array}{lll}\Psi'(r)
&=&(N-2)\Psi(r)/r+(N-2)/r\int_{\mathbb{B}_r}\phi^R|\omega|^2dV_x+2/r\ {\rm Re}\int_{\mathbb{B}_r}{\bf A}\omega\cdot\overline{\mathscr{H}_{\bf A}\omega}dV_x\\
\\
&&+2/r\ {\rm
Re}\int_{\mathbb{B}_r}(x\cdot\nabla\omega)\cdot\overline{\phi\omega}dV_x+2/r\
{\rm Re}\int_{\mathbb{B}_r}\omega x(D{\bf
A})^T\overline{\mathscr{H}_{\bf
A}\omega}^TdV_x\\
\\
&&+2\int_{\partial\mathbb{B}_r}|\nu\cdot(i\nabla\omega+{\bf A}\omega)|^2dS_x-2\ {\rm Re}\int_{\partial\mathbb{B}_r}({\bf A}\omega\cdot\nu)(\overline{\mathscr{H}_{\bf A}\omega\cdot\nu})dS_x\\
\\
&&-\int_{\partial\mathbb{B}_r}\phi^R|\omega|^2dS_x.\end{array}\end{equation}
\end{lem}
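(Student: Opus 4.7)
The plan is to derive the two identities by related but distinct calculations: a co-area argument combined with Lemma 2.1 for $\Phi'(r)$, and a Rellich/Pohozaev-type identity built from the PDE $\mathscr{H}_{\bf A}^2\omega=\phi\omega$ for $\Psi'(r)$.

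For $\Phi'(r)$, I would first rescale via $\Phi(r)=r^{N-1}\int_{\partial\mathbb{B}_1}|\omega(ry)|^2\,dS_y$ and differentiate in $r$ to obtain
\[
\Phi'(r)=\frac{N-1}{r}\Phi(r)+\int_{\partial\mathbb{B}_r}\nabla|\omega|^2\cdot\frac{x}{r}\,dS_x.
\]
Since ${\bf A}$ is real and $|\omega|^2$ is real, $\int_{\partial\mathbb{B}_r}i{\bf A}|\omega|^2\cdot(x/r)\,dS_x$ is purely imaginary; so taking real parts in Lemma 2.1 identifies the remaining surface integral with $2\Psi(r)$, which yields (14).

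For $\Psi'(r)$, direct differentiation gives $\Psi'(r)=\int_{\partial\mathbb{B}_r}(|\mathscr{H}_{\bf A}\omega|^2-\phi^R|\omega|^2)\,dS_x$. To extract the bulk/boundary structure in (15), I would apply the divergence theorem to the vector field $x\,|\mathscr{H}_{\bf A}\omega|^2$ (using $x\cdot\nu=r$ on $\partial\mathbb{B}_r$) to write
\[
\int_{\partial\mathbb{B}_r}|\mathscr{H}_{\bf A}\omega|^2\,dS_x=\frac{N}{r}\int_{\mathbb{B}_r}|\mathscr{H}_{\bf A}\omega|^2\,dV_x+\frac{1}{r}\int_{\mathbb{B}_r}x\cdot\nabla|\mathscr{H}_{\bf A}\omega|^2\,dV_x,
\]
and then exploit the commutator identity
\[
x_k\partial_k(\mathscr{H}_{\bf A}\omega)_j=(\mathscr{H}_{\bf A}(x\cdot\nabla\omega))_j-(\mathscr{H}_{\bf A}\omega)_j+A_j\omega+(x\cdot\nabla A_j)\,\omega,
\]
which follows by a direct computation from $(\mathscr{H}_{\bf A})_j=i\partial_j+A_j$. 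Dotting with $\overline{\mathscr{H}_{\bf A}\omega}$ and taking the real part decomposes $x\cdot\nabla|\mathscr{H}_{\bf A}\omega|^2$ into four contributions: three are already of the bulk form recorded in (15) (producing the $-2|\mathscr{H}_{\bf A}\omega|^2$, the ${\bf A}\omega\cdot\overline{\mathscr{H}_{\bf A}\omega}$, and the $(x\cdot\nabla){\bf A}$ term written in matrix form as $\omega\,x(D{\bf A})^T\overline{\mathscr{H}_{\bf A}\omega}^T$), while the remaining one is $2\,{\rm Re}\int\overline{\mathscr{H}_{\bf A}\omega}\cdot\mathscr{H}_{\bf A}(x\cdot\nabla\omega)\,dV_x$.

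I would then integrate this last piece by parts using the formal symmetry $\int\overline{F}\cdot\mathscr{H}_{\bf A}v\,dV=\int\overline{\mathscr{H}_{\bf A}^2\omega}\,v\,dV+i\int_{\partial}(\overline{F}\cdot\nu)v\,dS$ with $F=\mathscr{H}_{\bf A}\omega$ and $v=x\cdot\nabla\omega$; invoking $\mathscr{H}_{\bf A}^2\omega=\phi\omega$ converts the bulk part into ${\rm Re}\int(x\cdot\nabla\omega)\overline{\phi\omega}\,dV_x$ and leaves a boundary flux $i\int_{\partial\mathbb{B}_r}(\overline{\mathscr{H}_{\bf A}\omega}\cdot\nu)(x\cdot\nabla\omega)\,dS_x$. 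On $\partial\mathbb{B}_r$ I substitute $i\nabla\omega=\mathscr{H}_{\bf A}\omega-{\bf A}\omega$, so that $x\cdot\nabla\omega=-ir\,\nu\cdot\mathscr{H}_{\bf A}\omega+ir({\bf A}\cdot\nu)\omega$; after taking the real part this flux collapses precisely to $2\int_{\partial\mathbb{B}_r}|\nu\cdot(i\nabla\omega+{\bf A}\omega)|^2\,dS_x-2\,{\rm Re}\int_{\partial\mathbb{B}_r}({\bf A}\omega\cdot\nu)(\overline{\mathscr{H}_{\bf A}\omega\cdot\nu})\,dS_x$. Finally, combining $\tfrac{N}{r}\int|\mathscr{H}_{\bf A}\omega|^2-\tfrac{2}{r}\int|\mathscr{H}_{\bf A}\omega|^2=\tfrac{N-2}{r}\bigl(\Psi(r)+\int_{\mathbb{B}_r}\phi^R|\omega|^2\,dV_x\bigr)$ and subtracting the $\int_{\partial\mathbb{B}_r}\phi^R|\omega|^2\,dS_x$ contribution coming from the $-\phi^R|\omega|^2$ part of $\Psi$ assembles (15). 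The main obstacle I anticipate is the careful sign and conjugate bookkeeping throughout the commutator step and the subsequent integration by parts; recognizing the precise $|\nu\cdot\mathscr{H}_{\bf A}\omega|^2$ structure on the sphere after eliminating $\nabla\omega$ in favor of the magnetic gradient is the one nontrivial algebraic rearrangement required.
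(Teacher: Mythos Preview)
Your proposal is correct and follows essentially the same route as the paper. For $\Phi'(r)$ both arguments rescale to $\partial\mathbb{B}_1$, differentiate, and invoke Lemma~2.1; for $\Psi'(r)$ both differentiate directly, apply the divergence theorem to $x\,|\mathscr{H}_{\bf A}\omega|^2$, and then rewrite $\tfrac{1}{r}\int_{\mathbb{B}_r}x\cdot\nabla|\mathscr{H}_{\bf A}\omega|^2$ by an integration by parts that brings in $\mathscr{H}_{\bf A}^2\omega=\phi\omega$ and produces the boundary flux $|\nu\cdot\mathscr{H}_{\bf A}\omega|^2$. The only difference is organizational: the paper expands componentwise (splitting $\nabla_j(i\nabla_k\omega+a_k\omega)$ and integrating by parts in $x_k$), whereas you package the same computation via the commutator identity $x\cdot\nabla(\mathscr{H}_{\bf A}\omega)=\mathscr{H}_{\bf A}(x\cdot\nabla\omega)-\mathscr{H}_{\bf A}\omega+{\bf A}\omega+((x\cdot\nabla){\bf A})\omega$ followed by the adjoint identity $\int\overline{\mathscr{H}_{\bf A}\omega}\cdot\mathscr{H}_{\bf A}v=\int\overline{\mathscr{H}_{\bf A}^2\omega}\,v+i\int_{\partial}(\overline{\mathscr{H}_{\bf A}\omega}\cdot\nu)v$; this is a cleaner but equivalent bookkeeping of the same Rellich--Pohozaev calculation.
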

\begin{rem}
(14) shows that $\Phi(r)$ and $\Psi(r)$ are closed related with each
other. This relation is very important for our discussion.
\end{rem}
\begin{proof} First we consider the derivative of $\Phi(r)$ with respect to $r$. Indeed, we have
$$\begin{array}{lll}
\Phi'(r)&=&\int_{\partial\mathbb{B}_1}(|\omega(ry)|^2r^{N-1})'_rdS_y\\
\\
&=&\int_{\partial\mathbb{B}_1}\Big((\nabla\omega\cdot y)\overline{\omega}+\omega(\nabla\overline{\omega}\cdot y)r^{N-1}+|\omega|^2(N-1)r^{N-2}\Big)dS_y\\
\\
&=&\int_{\partial\mathbb{B}_r}\Big((\nabla\omega\cdot x/r)\overline{\omega}+\omega(\nabla\overline{\omega}\cdot x/r)\Big)dS_x+(N-1)/r\int_{\partial\mathbb{B}_r}|\omega|^2dS_x\\
\\
&=&(N-1)/r\Phi(r)+{\rm Re}\int_{\partial\mathbb{B}_r}\Big(\nabla|\omega|^2-i{\bf A}|\omega|^2\Big)\cdot x/rdS_x\\
\\
&=&(N-1)\Phi(r)/r+2\Psi(r).
\end{array}$$
As for the derivative of $\Psi(r)$ with respect to $r$, via the
divergence theorem, we have \beq\begin{array}{lll}
\Psi'(r)&=&\int_{\partial\mathbb{B}_r}|\mathscr{H}_{\bf A}\omega|^2dS_x-\int_{\partial\mathbb{B}_r}\phi^R|\omega|^2dS_x\\
\\
&=&1/r\int_{\partial\mathbb{B}_r}|\mathscr{H}_{\bf A}\omega|^2x\cdot x/rdS_x-\int_{\partial\mathbb{B}_r}\phi^R|\omega|^2dS_x\\
\\
&=&1/r\int_{\mathbb{B}_r}{\rm div}(|\mathscr{H}_{\bf A}\omega|^2x)dV_x-\int_{\partial\mathbb{B}_r}\phi^R|\omega|^2dS_x\\
\\
&=&N/r\int_{\mathbb{B}_r}|\mathscr{H}_{\bf
A}\omega|^2dV_x+\underbrace{1/r\int_{\mathbb{B}_r}x\cdot\nabla|\mathscr{H}_{\bf
A}\omega|^2dV_x}_{(I)}-\int_{\partial\mathbb{B}_r}\phi^R|\omega|^2dS_x.\end{array}\eneq
Now we treat the term (I) carefully.
$$\begin{array}{lll}
(I)&=&\displaystyle\sum_{j,k}1/r\int_{\mathbb{B}_r}x_j\nabla_j\Big((i\nabla_k\omega+a_k\omega)\overline{(i\nabla_k\omega+a_k\omega)}\Big)dV_x\\
\\
&=&\displaystyle\sum_{j,k}2/r\ {\rm Re}\int_{\mathbb{B}_r}x_j\Big(\nabla_j(i\nabla_k\omega+a_k\omega)\overline{(i\nabla_k\omega+a_k\omega)}\Big)dV_x\\
\\
&=&\displaystyle\sum_{j,k}2/r\ {\rm Re}\int_{\mathbb{B}_r}ix_j\nabla_j\nabla_k\omega\overline{(i\nabla_k\omega+a_k\omega)}dV_x+\displaystyle\sum_{j,k}2/r\ {\rm Re}\int_{\mathbb{B}_r}x_j\nabla_j(a_k\omega))\overline{(i\nabla_k\omega+a_k\omega)}dV_x\\
\\
&=&\displaystyle\sum_{j,k}2/r\ {\rm Re}\int_{\partial\mathbb{B}_r}ix_j\nabla_j\omega\overline{(i\nabla_k\omega+a_k\omega)}\nu_kdS_x-\displaystyle\sum_{j,k}2/r\ {\rm Re}\int_{\mathbb{B}_r}i\nabla_kx_j\nabla_j\omega\overline{(i\nabla_k\omega+a_k\omega)}dV_x\\
\\
&&-\displaystyle\sum_{j,k}2/r\ {\rm Re}\int_{\mathbb{B}_r}ix_j\nabla_j\omega\overline{\nabla_k(i\nabla_k\omega+a_k\omega)}dV_x+\displaystyle\sum_{j,k}2/r\ {\rm Re}\int_{\mathbb{B}_r}x_j\omega\nabla_ja_k\overline{(i\nabla_k\omega+a_k\omega)}dV_x\\
\\
&&+\displaystyle\sum_{j,k}2/r\ {\rm
Re}\int_{\mathbb{B}_r}a_kx_j\nabla_j\omega\overline{(i\nabla_k\omega+a_k\omega)}dV_x\\
\\
&=&\displaystyle\sum_{j,k}2/r\ {\rm Re}\int_{\partial\mathbb{B}_r}x_j(i\nabla_j\omega+a_j\omega)\overline{(i\nabla_k\omega+a_k\omega)}\nu_kdS_x\\
\\
&&-\displaystyle\sum_{j,k}2/r\ {\rm Re}\int_{\partial\mathbb{B}_r}x_ja_j\omega\overline{(i\nabla_k\omega+a_k\omega)}\nu_kdS_x-\displaystyle\sum_{j,k}2/r\ {\rm Re}\int_{\mathbb{B}_r}i\nabla_kx_j\nabla_j\omega\overline{(i\nabla_k\omega+a_k\omega)}dV_x\\
\\
&&-\displaystyle\sum_{j,k}2/r\ {\rm Re}\int_{\mathbb{B}_r}ix_j\nabla_j\omega\overline{\nabla_k(i\nabla_k\omega+a_k\omega)}dV_x+\displaystyle\sum_{j,k}2/r\ {\rm Re}\int_{\mathbb{B}_r}x_j\omega\nabla_ja_k\overline{(i\nabla_k\omega+a_k\omega)}dV_x\\
\\
&&+\displaystyle\sum_{j,k}2/r\ {\rm Re}\int_{\mathbb{B}_r}a_kx_j\nabla_j\omega\overline{(i\nabla_k\omega+a_k\omega)}dV_x\\
\\
&=&2\int_{\partial\mathbb{B}_r}|\nu\cdot\mathscr{H}_{\bf A}\omega|^2dS_x-2\ {\rm Re}\int_{\partial\mathbb{B}_r}({\bf A}\omega\cdot\nu)(\overline{\mathscr{H}_{\bf A}\omega\cdot\nu})dS_x\\
\\
&&-2/r\int_{\mathbb{B}_r}|\mathscr{H}_{\bf A}\omega|^2dV_x+{2}/{r}\ {\rm Re}\int_{\mathbb{B}_r}{\bf A}\omega\cdot\overline{\mathscr{H}_{\bf A}\omega}dV_x+2/r\ {\rm Re}\int_{\mathbb{B}_r}(x\cdot\nabla\omega)\cdot\overline{\mathscr{H}_{\bf A}^2\omega}dV_x\\
\\
&&+2/r\ {\rm Re}\int_{\mathbb{B}_r}\omega x(D{\bf
A})^T\overline{\mathscr{H}_{\bf A}\omega}^TdV_x\\
\\
&=&2\int_{\partial\mathbb{B}_r}|\nu\cdot\mathscr{H}_{\bf A}\omega|^2dS_x-2\ {\rm Re}\int_{\partial\mathbb{B}_r}({\bf A}\omega\cdot\nu)(\overline{\mathscr{H}_{\bf A}\omega\cdot\nu})dS_x\\
\\
&&-2/r\int_{\mathbb{B}_r}|\mathscr{H}_{\bf A}\omega|^2dV_x+2/r\ {\rm Re}\int_{\mathbb{B}_r}{\bf A}\omega\cdot\overline{\mathscr{H}_{\bf A}\omega}dV_x+2/r\ {\rm Re}\int_{\mathbb{B}_r}(x\cdot\nabla\omega)\cdot\overline{\phi\omega}dV_x\\
\\
&&+2/r\ {\rm Re}\int_{\mathbb{B}_r}\omega x(D{\bf
A})^T\overline{\mathscr{H}_{\bf A}\omega}^TdV_x.\\
\\
\end{array}$$
Finally, keeping in mind the definition of $\Psi(r)$, we reach the
conclusion immediately.
\end{proof}
Next we show an important comparison lemma.
\begin{lem}
There exists an $r_0\in(0,1)$ such that for every $r\in(0,r_0)$, we
have \begin{equation} \int_{\mathbb{B}_r}|\omega|^2dV_x\leq
r\int_{\partial\mathbb{B}_r}|\omega|^2dS_x.\end{equation}
\end{lem}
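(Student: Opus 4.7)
The plan is to recast the desired inequality as the non-negativity of the auxiliary function
$$H(r) \;\triangleq\; r\Phi(r) - \int_{\mathbb{B}_r}|\omega|^2 dV_x.$$
Set $F(r) \triangleq \int_{\mathbb{B}_r}|\omega|^2 dV_x$ so that, by the coarea/layer-cake formula, $F'(r) = \Phi(r)$ and $F(0) = 0$. Then $H(0) = 0$ and a direct differentiation gives $H'(r) = \Phi(r) + r\Phi'(r) - \Phi(r) = r\Phi'(r)$.

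Next I would feed Lemma 2.2 (the identity (14)) into this: $r\Phi'(r) = (N-1)\Phi(r) + 2r\Psi(r)$. Integrating from $0$ to $r$ yields
$$H(r) \;=\; (N-1)F(r) + 2\int_0^r s\,\Psi(s)\,ds.$$
The first summand is non-negative, so everything reduces to controlling $\int_0^r s\Psi(s)\,ds$ from below. Since $|\mathscr{H}_{\bf A}\omega|^2 \ge 0$ and $|\phi^R| \le \|\phi\|_{L^\infty}$, one has the crude bound $\Psi(s) \ge -\|\phi\|_{L^\infty} F(s)$. Because $F$ is non-decreasing, $\int_0^r s F(s)\,ds \le F(r)\,r^2/2$, and so
$$H(r) \;\ge\; (N-1)F(r) - \|\phi\|_{L^\infty}\,r^2\, F(r) \;=\; F(r)\bigl(N-1 - \|\phi\|_{L^\infty} r^2\bigr).$$
Choosing $r_0 \in (0,1)$ with $\|\phi\|_{L^\infty}\, r_0^2 \le N-1$ (any $r_0 \in (0,1)$ if $\phi\equiv 0$) makes the right-hand side non-negative for all $r \in (0,r_0)$, which is precisely the claimed comparison (17). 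Observe $N \ge 2$ is used critically here, as the factor $N-1$ must be strictly positive.

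The only conceptual step is the choice of the auxiliary function $H$ and the recognition that the identity (14) turns $H'$ into something one can integrate explicitly; after that, every step is a one-line estimate. I do not anticipate a serious obstacle: the magnetic vector ${\bf A}$ never enters this proof directly, because it has already been absorbed into $\Phi'$ via Lemma 2.2. The only place where care is needed is in verifying that $H(0)=0$; this follows because $\omega \in H^2(\mathbb{B}_1)$ is bounded (in $L^2$ sense) near the origin, so $\Phi(r) = O(r^{N-1})$ as $r \downarrow 0$ and hence $r\Phi(r) \to 0$.
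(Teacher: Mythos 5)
Your proof is correct, and it reaches the same underlying identity as the paper but by a genuinely different route. The paper integrates the multiplier $(r^2-|x|^2)$ against $\mathscr{H}_{\bf A}^2|\omega|^2$, computes the result in two ways using Lemma 2.1, and lands directly on
\begin{equation*}
N\int_{\mathbb{B}_r}|\omega|^2\,dV_x+\int_{\mathbb{B}_r}\bigl(|\mathscr{H}_{\bf A}\omega|^2-\phi^R|\omega|^2\bigr)(r^2-|x|^2)\,dV_x=r\int_{\partial\mathbb{B}_r}|\omega|^2\,dS_x.
\end{equation*}
You instead integrate the already-proved derivative formula (14), $r\Phi'(r)=(N-1)\Phi(r)+2r\Psi(r)$, to obtain $r\Phi(r)=NF(r)+2\int_0^r s\Psi(s)\,ds$. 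A Fubini swap ($\int_0^r s\Psi(s)\,ds=\tfrac12\int_{\mathbb{B}_r}(|\mathscr{H}_{\bf A}\omega|^2-\phi^R|\omega|^2)(r^2-|x|^2)\,dV_x$) shows these are literally the same identity, so the final step --- drop the non-negative $|\mathscr{H}_{\bf A}\omega|^2$ contribution, bound the $\phi^R$ part by $\|\phi^R\|_{L^\infty}r^2 F(r)$, and pick $r_0$ so that $N-1-\|\phi^R\|_{L^\infty}r_0^2\ge 0$ --- is identical. What your approach buys is economy: Lemma 2.2 is reused rather than re-deriving a fresh integration by parts with $\mathscr{H}_{\bf A}^2$, and the magnetic potential never reappears because it has already been absorbed into (14). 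Two small remarks. First, your bound uses $\|\phi\|_{L^\infty}$ where the paper uses the slightly sharper $\|\phi^R\|_{L^\infty}$; harmless, but worth matching. Second, your justification of $H(0^+)=0$ via ``$\Phi(r)=O(r^{N-1})$'' implicitly assumes $\omega\in L^\infty$ near the origin, which $H^2(\mathbb{B}_1)$ gives only for $N\le 3$. For general $N$ one should instead argue by the scaled trace inequality, $r\Phi(r)\le C\bigl(F(r)+r^2\|\nabla\omega\|^2_{L^2(\mathbb{B}_r)}\bigr)\to 0$, or simply observe that the paper's identity holds for each fixed $r$ without any passage to the limit, sidestepping the boundary term at $0$ entirely.
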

\begin{proof}
On the one hand, $$
\begin{array}{lll}&&\int_{\mathbb{B}_r}\mathscr{H}_{\bf
A}^2|\omega|^2\cdot(r^2-|x|^2)dV_x\\
\\
&=&\int_{\mathbb{B}_r}|\omega|^2\cdot\overline{\mathscr{H}_{\bf
A}^2(r^2-|x|^2)}dV_x+\int_{\partial\mathbb{B}_r}|\omega|^2\cdot\overline{{\partial(r^2-|x|^2)}/{\partial\nu_{i\mathscr{H}_{\bf
A}}}}dS_x\\
\\
&=&\int_{\mathbb{B}_r}|\omega|^2\Big(2N+2i{\bf A}\cdot
x-i\nabla\cdot{\bf A}(r^2-|x|^2)+{\bf A}{\bf A}^T(r^2-|x|^2)\Big)dV_x\\
\\
&&-2r\int_{\partial\mathbb{B}_r}|\omega|^2dS_x.
\end{array}$$
On the other hand,
$$\begin{array}{lll}&&\int_{\mathbb{B}_r}\mathscr{H}_{\bf
A}^2|\omega|^2\cdot(r^2-|x|^2)dV_x\\
\\
&=&\int_{\mathbb{B}_r}\Big(-\overline{\omega}\Delta\omega-\omega\Delta\overline{\omega}-2|\nabla\omega|^2\Big)(r^2-|x|^2)dV_x+\int_{\mathbb{B}_r}{\bf
A}{\bf
A}^T|\omega|^2(r^2-|x|^2)dV_x\\
\\
&&+\int_{\mathbb{B}_r}\Big(i\nabla\cdot{\bf
A}|\omega|^2+i\overline{\omega}{\bf A}\cdot\nabla\omega+i\omega{\bf
A}\cdot\nabla\overline{\omega}\Big)(r^2-|x|^2)dV_x\\
\\
&=&\int_{\mathbb{B}_r}\Big(-2|\mathscr{H}_{\bf
A}\omega|^2+2\phi^R|\omega|^2\Big)(r^2-|x|^2)dV_x+\int_{\mathbb{B}_r}{\bf
A}{\bf
A}^T|\omega|^2(r^2-|x|^2)dV_x\\
\\
&&+\int_{\mathbb{B}_r}\Big(i\nabla\cdot{\bf
A}|\omega|^2+i\overline{\omega}{\bf A}\cdot\nabla\omega+i\omega{\bf
A}\cdot\nabla\overline{\omega}\Big)(r^2-|x|^2)dV_x.
\end{array}$$
As a result,
$$\int_{\mathbb{B}_r}\Big(2N|\omega|^2+2|\mathscr{H}_{\bf
A}\omega|^2(r^2-|x|^2)-2\phi^R|\omega|^2(r^2-|x|^2)\Big)dV_x=2r\int_{\partial\mathbb{B}_r}|\omega|^2dS_x.$$
When $\|\phi^R\|_{L^\infty}>0$, then we choose $r_0\in(0,{1}/{2})$
such that
$$r_0^2\leq{(N-1)}/{\|\phi^R\|_{L^\infty}}.$$ It follows
immediately that
$$\int_{\mathbb{B}_r}|\omega|^2dV_x\leq\int_{\mathbb{B}_r}\Big(N|\omega|^2-\phi^R|\omega|^2(r^2-|x|^2)\Big)dV_x\leq
r\int_{\partial\mathbb{B}_r}|\omega|^2dS_x.$$ When
$\|\phi^R\|_{L^\infty}=0$, then it is evident $$
\int_{\mathbb{B}_r}|\omega|^2dV_x\leq
{r}/{N}\int_{\partial\mathbb{B}_r}|\omega|^2dS_x.$$
\end{proof}
Assume that there exists a small $r_1\in(0,1)$ such that
\begin{equation}\Phi(r)\neq0\ \ \text{for}\ \ \forall\ r\in(0,r_1).\end{equation} Define
the frequency function
\begin{equation}\digamma(r)\triangleq{r\Psi(r)}/{\Phi(r)},\ \ \
r\in(0,r_1).\end{equation} Let $r^\ast\triangleq\min\{r_0,r_1\}$,
and we set
\begin{equation}\beth_{r^\ast}\triangleq\Big\{r\in(0,r^\ast):\digamma(r)>1\Big\}.\end{equation}
With the above definitions, we have the following inequality for the
frequency function.
\begin{lem}
Under the assumptions (18)-(20), there exists a positive constant
$\tau=\tau(N,\phi)$ which is independent of $r$ such that
$\digamma'(r)$ is estimated in a uniform fashion,
$$\digamma'(r)\geq-\digamma(r)\tau.$$
\end{lem}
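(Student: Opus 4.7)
The plan is to compute $\digamma'(r)$ by the quotient rule and substitute the identities (14) and (15). Starting from
\[
\digamma'(r)=\frac{\Psi(r)}{\Phi(r)}+\frac{r\Psi'(r)}{\Phi(r)}-\frac{r\Psi(r)\Phi'(r)}{\Phi(r)^2},
\]
I would use $\Phi'(r)=(N-1)\Phi(r)/r+2\Psi(r)$ to turn the last summand into $(N-1)\Psi(r)/\Phi(r)+2r\Psi(r)^2/\Phi(r)^2$. Combined with the first summand this leaves a coefficient $(2-N)\Psi/\Phi$, which is then precisely cancelled by the leading $(N-2)\Psi(r)/r$ term appearing in $r\Psi'(r)/\Phi(r)$ from (15). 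This algebraic cancellation is the first key step.

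After cancellation the expression for $\digamma'(r)$ splits into three groups: (a) the positive boundary term $2r\int_{\partial\mathbb{B}_r}|\nu\cdot\mathscr{H}_{\bf A}\omega|^2 dS_x /\Phi(r)$; (b) the negative quadratic term $-2r\Psi(r)^2/\Phi(r)^2$ inherited from the quotient rule; and (c) several lower-order terms carrying factors of ${\bf A}$, $D{\bf A}$ or $\phi$. The crucial step is to show $(a)+(b)\ge 0$. Rewriting Lemma 2.1 (using that ${\bf A}$ is real so the $-i{\bf A}|\omega|^2$ contribution has zero real part), one finds $2\Psi(r)={\rm Re}\int_{\partial\mathbb{B}_r}\nabla|\omega|^2\cdot x/r\,dS_x$, and expressing $\nabla|\omega|^2=2{\rm Re}(\overline\omega\nabla\omega)=-2{\rm Im}(\overline\omega\,\mathscr{H}_{\bf A}\omega)$ yields $2\Psi(r)=-2\int_{\partial\mathbb{B}_r}{\rm Im}(\overline\omega\,\nu\cdot\mathscr{H}_{\bf A}\omega)\,dS_x$. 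Cauchy--Schwarz on the sphere then gives $\Psi(r)^2\le \Phi(r)\int_{\partial\mathbb{B}_r}|\nu\cdot\mathscr{H}_{\bf A}\omega|^2dS_x$, so $(a)$ absorbs $(b)$.

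To bound the lower-order terms in (c) from below by $-\tau\digamma(r)$, I would treat the interior integrals such as $\int_{\mathbb{B}_r}{\bf A}\omega\cdot\overline{\mathscr{H}_{\bf A}\omega}\,dV_x$, $\int_{\mathbb{B}_r}(x\cdot\nabla\omega)\overline{\phi\omega}\,dV_x$, and $\int_{\mathbb{B}_r}\omega\,x(D{\bf A})^T\overline{\mathscr{H}_{\bf A}\omega}^T dV_x$ by Cauchy--Schwarz, bounding each factor by $\|{\bf A}\|_{C^1}$, $\|\phi\|_\infty$ times either $\Phi(r)^{1/2}$ (via the comparison Lemma 2.3, $\int_{\mathbb{B}_r}|\omega|^2\le r\Phi(r)$) or $(\Psi(r)+\|\phi^R\|_\infty r\Phi(r))^{1/2}$. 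The assumption $r\in\beth_{r^\ast}$, i.e. $\digamma(r)>1$, i.e. $\Phi(r)<r\Psi(r)$, then lets me dominate any $\Phi/\Phi$ remainder by $r\Psi/\Phi=\digamma(r)$, and AM--GM converts mixed products $\Phi^{1/2}\Psi^{1/2}$ into the required $\digamma(r)$ scale.

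The main obstacle will be the purely boundary remainders in (15), namely $-\int_{\partial\mathbb{B}_r}\phi^R|\omega|^2dS_x$ and the cross term $-2\,{\rm Re}\int_{\partial\mathbb{B}_r}({\bf A}\omega\cdot\nu)(\overline{\mathscr{H}_{\bf A}\omega\cdot\nu})\,dS_x$, which cannot be converted to bulk quantities. My strategy will be to split the positive boundary integral in $(a)$ into two parts: a fraction $(1-\epsilon)$ reserved for the Cauchy--Schwarz cancellation against $(b)$, and a fraction $\epsilon$ used to absorb the cross term through $2|{\bf A}\omega\cdot\nu||\mathscr{H}_{\bf A}\omega\cdot\nu|\le \epsilon|\nu\cdot\mathscr{H}_{\bf A}\omega|^2+\epsilon^{-1}\|{\bf A}\|_\infty^2|\omega|^2$, so that the leftover $\Phi(r)$-type boundary integral and the $\phi^R$ boundary integral contribute only $-C(\|\phi^R\|_\infty+\|{\bf A}\|_\infty^2)\,r\Phi(r)/\Phi(r)$, which is $O(r)$ and therefore trivially $\ge -\tau\digamma(r)$ on $\beth_{r^\ast}$. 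Collecting all bounds produces $\tau=\tau(N,\|{\bf A}\|_{C^1},\|\phi\|_\infty)$, independent of $r$, as required.
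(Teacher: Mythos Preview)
Your overall architecture matches the paper's: compute $\digamma'$, cancel the $(N-2)$ terms, isolate the nonnegative ``Cauchy--Schwarz core'' $(a)+(b)$, and show all remaining pieces are bounded by $C\digamma(r)$. The handling of the bulk remainders (your group (c)) is also essentially the paper's argument. The genuine gap is in your treatment of the boundary cross term $-2\,{\rm Re}\int_{\partial\mathbb{B}_r}({\bf A}\omega\cdot\nu)(\overline{\mathscr{H}_{\bf A}\omega\cdot\nu})\,dS_x$.

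Your plan is to sacrifice a fraction $\epsilon$ of $(a)$ to absorb this term via Young's inequality. But the Cauchy--Schwarz step only gives $(a)\ge -(b)$; it gives no upper bound on $(a)$, and the inequality can be asymptotically sharp. Concretely, after absorbing the cross term you are left with
\[
(1-\epsilon)(a)+(b)\ \ge\ (1-\epsilon)\bigl(-(b)\bigr)+(b)\ =\ \epsilon\,(b)\ =\ -\frac{2\epsilon}{r}\,\digamma(r)^{2},
\]
which is a loss of order $\digamma(r)^{2}$, not $\digamma(r)$. No choice of $\epsilon$ (constant or $\epsilon\sim r$) turns this into $-\tau\digamma(r)$ without an \emph{a priori} upper bound on $\digamma$, which is precisely what the lemma is trying to establish. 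So the ``leftover is only $O(r)\Phi/\Phi$'' claim is incorrect: you have overlooked the deficit created in the Cauchy--Schwarz core.

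The paper closes this gap with an additional ingredient you are missing: a Rellich-type estimate (stated as a separate lemma inside the proof) obtained by multiplying the equation $\mathscr{H}_{\bf A}^{2}\omega=\phi\omega$ by ${\bf H}(x)\cdot\mathscr{H}_{\bf A}\omega$ with ${\bf H}(x)=x/r$ and integrating by parts. This yields an independent bound
\[
\int_{\partial\mathbb{B}_r}|\nu\cdot\mathscr{H}_{\bf A}\omega|^{2}\,dS_x\ \le\ \frac{C^{\ast}(\phi)}{r}\,\Psi(r),
\]
so that the cross term divided by $\Psi(r)$ is bounded by a constant (take $\epsilon=r/2$ in Young). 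With this estimate in hand one need not borrow from $(a)$ at all, and the $(a)+(b)\ge 0$ inequality is used in full strength. You should replace your splitting argument by this Rellich identity.
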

\begin{proof} Actually, from (17)-(20), we have
$$\int_{\mathbb{B}_r}|\mathscr{H}_{\bf
A}\omega|^2dV_x>({1}/{r^2}-\|\phi^R\|_{L^\infty})\int_{\mathbb{B}_r}|\omega|^2dV_x.$$
Indeed, $$\begin{array}{lll}
\int_{\mathbb{B}_r}|\mathscr{H}_{\bf A}\omega|^2dV_x&=&\int_{\mathbb{B}_r}(|\mathscr{H}_{\bf A}\omega|^2-\phi^R|\omega|^2)dV_x+\int_{\mathbb{B}_r}\phi^R|\omega|^2dV_x\\
\\
&>&1/r\int_{\partial\mathbb{B}_r}|\omega|^2dS_x+\int_{\mathbb{B}_r}\phi^R|\omega|^2dV_x\\
\\
&\geq&{1}/{r^2}\int_{\mathbb{B}_r}|\omega|^2dV_x+\int_{\mathbb{B}_r}\phi^R|\omega|^2dV_x\\
\\
&\geq&({1}/{r^2}-\|\phi^R\|_{L^\infty})\int_{\mathbb{B}_r}|\omega|^2dV_x.
\end{array}$$
This indicates the integral $\int_{\mathbb{B}_r}|\mathscr{H}_{\bf
A}\omega|^2dV_x$ is the dominating part in $\Psi(r)$. By calculating
$\digamma'(r)$ with respect to $r$, we have the following identity,
$$\begin{array}{lll}
\digamma'(r)&=&\digamma(r)\Big(\Psi'(r)/\Psi(r)+1/r-\Phi'(r)/\Phi(r)\Big)\\
\\
&=&\digamma(r)\underbrace{\Big({2\int_{\partial\mathbb{B}_r}|\nu\cdot(i\nabla\omega+{\bf
A}\omega)|^2dS_x}/{{\rm Re}
\int_{\partial\mathbb{B}_r}(x/r\cdot\nabla\omega)\overline{\omega}dS_x}}_{(II)}\\
\\
&&-\underbrace{{2\ {\rm
Re}\int_{\partial\mathbb{B}_r}({x}/{r}\cdot\nabla\omega)
\overline{\omega}dS_x}/{\int_{\partial\mathbb{B}_r}|\omega|^2dS_x}\Big)}_{(II')}\\
\\
&&+\digamma(r)\Big\{\underbrace{{(N-2)}/{r}\int_{\mathbb{B}_r}\phi^R|\omega|^2dV_x}_{(III)}+\underbrace{{2}/{r}
\ {\rm Re}\int_{\mathbb{B}_r}{\bf A}\omega\cdot\overline{\mathscr{H}_{\bf A}\omega}dV_x}_{(IV)}\\
\\
&&+\underbrace{{2}/{r}\ {\rm
Re}\int_{\mathbb{B}_r}(x\cdot\nabla\omega)\cdot\overline{\phi\omega}dV_x}_{(V)}+\underbrace{{2}/{r}\
{\rm Re}\int_{\mathbb{B}_r}\omega x(D{\bf
A})^T\overline{\mathscr{H}_{\bf
A}\omega}^TdV_x}_{(VI)}\\
\\
&&\underbrace{-2\ {\rm Re}\int_{\partial\mathbb{B}_r}({\bf A}\omega\cdot\nu)(\overline{\mathscr{H}_{\bf A}\omega\cdot\nu})dS_x}_{(VII)}\\
\\
&&\underbrace{-\int_{\partial\mathbb{B}_r}\phi^R|\omega|^2dS_x}_{(VIII)}
\Big\}\Big/\Big\{{1}/{2}\int_{\partial\mathbb{B}_r}{x}/{r}\cdot\nabla|\omega|^2dS_x\Big\}.
\end{array}$$
We estimate each term respectively. For (II)-(II'), we apply
H\"{o}lder's inequality and obtain
$$\begin{array}{lll}&&(II)-(II')\\
\\
&=& 2\int_{\partial\mathbb{B}_r}|\nu\cdot(i\nabla\omega+{\bf
A}\omega)|^2dS_x/ {\rm
Re}\int_{\partial\mathbb{B}_r}({x}/{r}\cdot\nabla\omega)\overline{\omega}dS_x\\
\\
&&-2\ {\rm
Re}\int_{\partial\mathbb{B}_r}({x}/{r}\cdot\nabla\omega)\overline{\omega}dS_x/\int_{\partial\mathbb{B}_r}|\omega|^2dS_x\\
\\
&=&2\int_{\partial\mathbb{B}_r}|\nu\cdot(i\nabla\omega+{\bf
A}\omega)|^2dS_x/{\rm
Re}\int_{\partial\mathbb{B}_r}({x}/{r}\cdot(\nabla\omega-i{\bf
A}\omega))\overline{\omega}dS_x\\
\\
&&-2\ {\rm
Re}\int_{\partial\mathbb{B}_r}({x}/{r}\cdot(\nabla\omega-i{\bf
A}\omega))\overline{\omega}dS_x/\int_{\partial\mathbb{B}_r}|\omega|^2dS_x\\
\\
&\geq&2\int_{\partial\mathbb{B}_r}|\nu\cdot(i\nabla\omega+{\bf
A}\omega)|^2dS_x/\Big(\sqrt{\int_{\partial\mathbb{B}_r}|({x}/{r}\cdot(\nabla\omega-i{\bf
A}\omega))|^2dS_x}\sqrt{\int_{\partial\mathbb{B}_r}|\omega|^2
dS_x}\Big)\\
\\
&&-2\sqrt{\int_{\partial\mathbb{B}_r}|({x}/{r}\cdot(\nabla\omega-i{\bf
A}\omega))|^2dS_x}\sqrt{\int_{\partial\mathbb{B}_r}|\omega|^2
dS_x}/\int_{\partial\mathbb{B}_r}|\omega|^2dS_x\\
\\
&\geq&0.
\end{array}$$
In addition, we have
\begin{lem} There exists a constant $C^\ast(\phi)$ independent of $r$ such that
$$
{\int_{\partial\mathbb{B}_r}|\nu\cdot(i\nabla\omega+{\bf
A}\omega)|^2dS_x}/{\Psi(r)}\leq {C^\ast(\phi)}/{r}.$$
\end{lem}
\begin{proof} Indeed, by multiplying ${\bf
H}(x)\cdot\mathscr{H}_{\bf A}\omega$ to $\mathscr{H}_{\bf
A}^2\omega=\phi\omega$ and integrating by parts, we have the
following identity,
$$\begin{array}{lll}
&&-{1}/{2}\int_{\partial\mathbb{B}_r}\Big|{\partial
\omega}/{\partial\nu_{i\mathscr{H}_{\bf A}}}\Big|^2\cdot\Big({\bf
H}(x)\cdot\nu\Big)dS_x \\
\\
&=&{1}/{2}\int_{\mathbb{B}_r}\Big(\nabla\cdot{\bf H(x)}\Big)\cdot\Big|\mathscr{H}_{\bf A}\omega\Big|^2dV_x-{\rm Im}\int_{\mathbb{B}_r}\phi\omega\cdot\Big({\bf H}(x)\cdot\overline{\mathscr{H}_{\bf A}\omega}\Big)dV_x\\
\\
&&-{\rm Re}\int_{\mathbb{B}_r}\mathscr{H}_{\bf A}\omega(D{\bf
H})^T\overline{\mathscr{H}_{\bf A}^T\omega}dV_x-{\rm
Re}\int_{\mathbb{B}_r} \overline{\omega}\mathscr{H}_{\bf
A}\omega\Xi_{\bf A}{\bf H}^TdV_x.
\end{array}
$$
Since
$$
\int_{\mathbb{B}_r}|\mathscr{H}_{\bf
A}\omega|^2dV_x\geq({1}/{r^2}-\|\phi^R\|_{L^\infty})\int_{\mathbb{B}_r}|\omega|^2dV_x,
$$
by choosing $${\bf H}(x)\triangleq{x}/{r},$$ we have the following
estimate, $$\begin{array}{lll}
&&{1}/{2}\int_{\partial\mathbb{B}_r}\Big|{\partial
\omega}/{\partial\nu_{i\mathscr{H}_{\bf A}}}\Big|^2dS_x\\
\\
&\leq&{N}/{(2r)}\int_{\mathbb{B}_r}|\mathscr{H}_{\bf
A}\omega|^2dV_x+1/2{\|\phi\|_{L^\infty}}\int_{\mathbb{B}_r}(|\omega|^2+
|\mathscr{H}_{\bf A}\omega|^2)dV_x\\
\\
&&+{1}/{r}\int_{\mathbb{B}_r}|\mathscr{H}_{\bf
A}\omega|^2dV_x+{1}/{2}\max\|\Xi_{\bf
A}\|_F\int_{\mathbb{B}_r}(|\omega|^2+
|\mathscr{H}_{\bf A}\omega|^2)dV_x\\
\\
&=&\underbrace{\Big({(N+2)}/{(2r)}+{1/2(\|\phi\|_{L^\infty}+\max\|\Xi_{\bf A}\|_F)}\Big)}_{\alpha}\int_{\mathbb{B}_r}|\mathscr{H}_{\bf A}\omega|^2dV_x\\
\\
&&+\underbrace{{1/2\Big(\|\phi\|_{L^\infty}+\max\|\Xi_{\bf A}\|_F\Big)}}_{\beta}\int_{\mathbb{B}_r}|\omega|^2dV_x\\
\\
&\leq&\Big(\alpha+\beta{r^2}/{(1-
r^2\|\phi^R\|_{L^\infty})}\Big)\int_{\mathbb{B}_r}|\mathscr{H}_{\bf
A}\omega|^2dV_x.
\end{array}
$$
where $\|\Xi_{\bf A}\|_F$ denotes the Frobenius norm of the test
matrix $\Xi_{\bf A}$. Since
$$\Psi(r)=\int_{\mathbb{B}_r}(|\mathscr{H}_{\bf A}\omega|^2-\phi^R|\omega|^2)dV_x\geq{(1-2r^2\|\phi^R\|_{L^\infty})}/{(1-r^2\|\phi^R\|_{L^\infty})}\int_{\mathbb{B}_r}|\mathscr{H}_{\bf A}\omega|^2dV_x,$$
therefore, $$\begin{array}{lll}
&&{\int_{\partial\mathbb{B}_r}|\nu\cdot(i\nabla\omega+{\bf
A}\omega)|^2dS_x}/{\Psi(r)}\\
\\
&\leq&
{(1-r^2\|\phi^R\|_{L^\infty})}/{(1-2r^2\|\phi^R\|_{L^\infty})}\Big(2\alpha+{2r^2}/{(1-
r^2\|\phi^R\|_{L^\infty})}\beta\Big).\end{array}$$ The conclusion
follows immediately.
\end{proof}
Taking Lemma 2.6 into account and noticing the fact $\digamma(r)>1$,
for the term (III), we have
$$\begin{array}{lll}
&&(III)/\Psi(r)\\
\\
&=&{(N-2)\int_{\mathbb{B}_r}\phi^R|\omega|^2dV_x}/{(r\Psi(r))}\\
\\
&\leq&{(N-2)\|\phi^R\|_{L^\infty}\Phi(r)}/{\Psi(r)}\\
\\
&\leq& r(N-2)\|\phi^R\|_{L^\infty}.\end{array}
$$
For the term (IV),
$$\begin{array}{lll}&&|(IV)/\Psi(r)|\\
\\
&=&|2\ {\rm Re}\int_{\mathbb{B}_r}{\bf
A}\omega\cdot\overline{\mathscr{H}_{\bf
A}\omega}dV_x|/(r\Psi(r))\\
\\
&\leq&2\int_{\mathbb{B}_r}|{\bf
A}\omega|\cdot|\overline{\mathscr{H}_{\bf
A}\omega}|dV_x/(r\Psi(r))\\
\\
&\leq&\Big({1}/{(2\epsilon r)}\int_{\mathbb{B}_r}|{\bf
A}\omega|^2dV_x+{2\epsilon}/{r}\int_{\mathbb{B}_r}|\mathscr{H}_{\bf
A}\omega|^2dV_x\Big)/\Psi(r)\\
\\
&=&\Big({1}{(2\epsilon r)}\int_{\mathbb{B}_r}{\bf A}{\bf
A}^T|\omega|^2dV_x+{2\epsilon}/{r}\int_{\mathbb{B}_r}(|\mathscr{H}_{\bf
A}\omega|^2-\phi^R|\omega|^2)dV_x+{2\epsilon}/{r}\int_{\mathbb{B}_r}\phi^R|\omega|^2dV_x\Big)/\Psi(r)\\
\\
&\leq&\|{\bf A}{\bf
A}^T\|_{L^\infty}\Phi(r)/(2\epsilon\Psi(r))+2\epsilon/r+2\epsilon\|\phi^R\|_{L^\infty}\Phi(r)/\Psi(r).
\end{array}$$
Let $\epsilon={r}/{2},$ since $\digamma(r)>1$, then $$
|(IV)/\Psi(r)|\leq\|{\bf A}{\bf
A}^T\|_{L^\infty}+1+r^2\|\phi^R\|_{L^\infty}.$$ For the term $(V)$,
$$\begin{array}{lll}&&|(V)/\Psi(r)|\\
\\
&=&|2\ {\rm
Re}\int_{\mathbb{B}_r}(x\cdot\nabla\omega)\cdot\overline{\phi\omega}dV_x|/(r\Psi(r))\\
\\
&=&|2\ {\rm
Re}\int_{\mathbb{B}_r}(x\cdot(\nabla\omega-i{\bf A}\omega))\cdot\overline{\phi\omega}dV_x|/(r\Psi(r))\\
\\
&\leq&\Big({\|\phi\|_{L^\infty}}/{(2\epsilon
r)}\int_{\mathbb{B}_r}|\omega|^2dV_x+{2/r\epsilon\|\phi\|_{L^\infty}}\int_{\mathbb{B}_r}|x\cdot\mathscr{H}_{\bf
A}\omega|^2dV_x\Big)/\Psi(r)\\
\\
&\leq&\Big({\|\phi\|_{L^\infty}}/{(2\epsilon
r)}\int_{\mathbb{B}_r}|\omega|^2dV_x+2r\epsilon\|\phi\|_{L^\infty}\int_{\mathbb{B}_r}|\mathscr{H}_{\bf
A}\omega|^2dV_x\Big)/\Psi(r)\\
\\
&=&\Big({\|\phi\|_{L^\infty}}/{(2\epsilon
r)}\int_{\mathbb{B}_r}|\omega|^2dV_x+2r\epsilon\|\phi\|_{L^\infty}\int_{\mathbb{B}_r}(|\mathscr{H}_{\bf
A}\omega|^2-\phi^R|\omega|^2)dV_x\\
\\&&+2r\epsilon\|\phi\|_{L^\infty}\int_{\mathbb{B}_r}\phi^R|\omega|^2dV_x\Big)/\Psi(r)\\
\\
&\leq&r\|\phi\|_{L^\infty}
/(2\epsilon)+2r\epsilon\|\phi\|_{L^\infty}+2r^3\epsilon\|\phi\|_{L^\infty}^2.
\end{array}$$
Let $\epsilon={r}/{2}$, then $$
|(V)/\Psi(r)|\leq\|\phi\|_{L^\infty}(1+r^2+r^4\|\phi\|_{L^\infty}).$$
For the term (VI),
$$\begin{array}{lll}&&|(VI)/\Psi(r)|\\
\\
&=&|2\ {\rm Re}\int_{\mathbb{B}_r}\omega x(D{\bf
A})^T\overline{\mathscr{H}_{\bf
A}\omega}^TdV_x|/(r\Psi(r))\\
\\
&\leq&2\int_{\mathbb{B}_r}\|\omega x\|_2\|(D{\bf
A})^T\overline{\mathscr{H}_{\bf
A}\omega}^T\|_2dV_x/(r\Psi(r))\\
\\
&\leq&2\int_{\mathbb{B}_r}\|\omega x\|_2\|(D{\bf
A})^T\|_{F}\|\mathscr{H}_{\bf
A}\omega\|_2dV_x/(r\Psi(r))\\
\\
&\leq&2\max\|(D{\bf
A})^T\|_{F}\int_{\mathbb{B}_r}|\omega||\mathscr{H}_{\bf
A}\omega|dV_x/\Psi(r)\\
\\
&=&\Big({\max\|(D{\bf A})^T\|_{F}}/{(2\epsilon)
}\int_{\mathbb{B}_r}|\omega|^2dV_x+2\epsilon\max\|(D{\bf
A})^T\|_{F}\Big\{\int_{\mathbb{B}_r}(|\mathscr{H}_{\bf
A}\omega|^2-\phi^R|\omega|^2)dV_x\\
\\
&&+\int_{\mathbb{B}_r}\phi^R|\omega|^2dV_x\Big\}\Big)/\Psi(r)\\
\\
&\leq&\max\|(D{\bf A})^T\|_{F}(r^2/(2\epsilon)+2\epsilon+2\epsilon
r^2\|\phi^R\|_{L^\infty}).
\end{array}$$
Let $\epsilon={r}/{2}$, then
$$|(VI)/\Psi(r)|\leq\max\|(D{\bf
A})^T\|_{F}(2r+r^3\|\phi^R\|_{L^\infty}),$$ where $\|(D{\bf
A})^T\|_{F}$ denotes the Frobenius norm of $(D{\bf A})^T$.\\

Let $\epsilon={r}/{2}$. By Schwartz's inequality, we have the
estimate below for the term (VII),
$$\begin{array}{lll}&&|(VII)/\Psi(r)|\\
\\
&=&|-2\ {\rm Re}\int_{\partial\mathbb{B}_r}({\bf
A}\omega\cdot\nu)(\overline{\mathscr{H}_{\bf
A}\omega\cdot\nu})dS_x|/\Psi(r)\\
\\
&\leq&2\int_{\partial\mathbb{B}_r}|{\bf
A}\omega\cdot\nu||\overline{\mathscr{H}_{\bf
A}\omega\cdot\nu}|dS_x/\Psi(r)\\
\\
&\leq&\Big({1}/{(2\epsilon)}\int_{\partial\mathbb{B}_r}|{\bf
A}\omega|^2dS_x+2\epsilon\int_{\partial\mathbb{B}_r}|\mathscr{H}_{\bf
A}\omega\cdot\nu|^2dS_x\Big)/\Psi(r)\\
\\
&\leq&\|{\bf A}{\bf
A}^T\|_{L^\infty}\int_{\partial\mathbb{B}_r}|\omega|^2dS_x/(2\epsilon\Psi(r))+2\epsilon\int_{\partial\mathbb{B}_r}|\mathscr{H}_{\bf
A}\omega\cdot\nu|^2dS_x/\Psi(r)\\
\\
&\leq&r\|{\bf A}{\bf
A}^T\|_{L^\infty}/(2\epsilon)+2\epsilon\int_{\partial\mathbb{B}_r}|\mathscr{H}_{\bf
A}\omega\cdot\nu|^2dS_x/\Psi(r)\\
\\
&\leq&\|{\bf A}{\bf A}^T\|_{L^\infty}+C^\ast(\phi),
\end{array}$$
where $C^\ast(\phi)$ is from Lemma 2.6.\\

For the last term (VIII), a simple calculation leads to
$$|(VIII)/\Psi(r)|=\int_{\partial\mathbb{B}_r}\phi^R|\omega|^2dS_x/\Psi(r)\leq
r\|\phi^R\|_{L^\infty}.
$$
From the above estimates, we conclude that there exists a positive
constant $\tau=\tau(N,\phi)$ which is independent of $r$ such that
$$\digamma'(r)\geq-\digamma(r)\tau.$$
\end{proof}
It follows that $\exp(\tau r)\digamma(r)$ is monotonously increasing
on $(0,r^\ast)$, that is to say, $$\exp(\tau
r)\digamma(r)\leq\exp(\tau r^\ast)\digamma(r^\ast).$$ Keeping in
mind the case $\digamma\leq 1$, we know that, $\digamma(r)$ is
bounded on $(0,r^\ast)$. Since
$$ \Phi'(r)={(N-1)}/{r}\Phi(r)+2\Psi(r),$$ then $$
\Big(\log({\Phi(r)}/{r^{N-1}})\Big)'={2\Psi(r)}/{\Phi(r)}={2\digamma(r)}/{r}\leq
{C(\tau)}/{r}.$$ We integrate from $\gamma$ to $2\gamma$, then
$$\log({2^{1-N}\Phi(2\gamma)}/{\Phi(\gamma)})\leq C(\tau)\log
2.$$ It follows that $$ \Phi(2\gamma)\leq
2^{C(\tau)+N-1}\Phi(\gamma).$$ Finally, integrating with respect to
$\gamma$ gives
$$ \int_{\mathbb{B}_{2\gamma}}|\omega|^2dV_x\leq
2^{C(\tau)+N-1}\int_{\mathbb{B}_\gamma}|\omega|^2dV_x.$$ Since
$\mathbb{B}_1$ is connected, then our theorem follows immediately.
\begin{rem}
It is of great interest to explore the strong unique continuation
for a variety of Schr\"{o}dinger operators with singular or
nonlinear potentials by the multiplier method. More results will be
available in sequential papers.
\end{rem}
\section{Proof of Corollary 1.4} In this section, we show an important application of Theorem 1.3 in
\cite{LU1}.\\

{\rm\bf Proof of Corollary 1.4}: Let $\mathbb{B}$ be an arbitrarily
small open ball such that
$$\Gamma\cap\mathbb{B}\neq\varnothing.$$ Set
$$\Omega^1\triangleq\Omega\cup\mathbb{B},$$ and define
$$\omega
^1\triangleq\left\{\begin{array}{lll}
\omega&\text{in}&\Omega;\\
\\
0&\text{in}&\mathbb{B}\backslash\Omega.
\end{array}\right.$$
It is sufficient to verify that $\omega^1\in H^2$. Denote by
$\omega^1_j$, $\omega^1_{jk}$ the extension by zero to $\Omega^1$ of
the derivatives $\nabla_j\omega$, $\nabla_j\nabla_k\omega$,
$j,k=1,\cdots,N$. Then $\omega_j$, $\omega_{jk}\in L^2(\Omega^1)$
and it is necessary to demonstrate that, for
$\forall\zeta\in\mathscr{D}(\Omega^1)$, $$
\int_{\Omega^1}\omega^1\nabla_j\overline{\zeta}
dx=-\int_{\Omega^1}\omega^1_j\overline{\zeta} dx,$$ and $$
\int_{\Omega^1}\omega^1_j\nabla_k\overline{\zeta}
dx=-\int_{\Omega^1}\omega^1_{jk}\overline{\zeta} dx.$$ Indeed, since
$\omega^1_j=\omega_{jk}^1\equiv0$ outside of $\Omega$,
$\zeta\equiv0$ on $\Gamma\backslash(\Gamma\cap\mathbb{B})$ and
$\omega={\partial\omega}/{\partial\nu}\equiv0$ on
$\Gamma\cap\mathbb{B}$, we have $$\begin{array}{lll}
&&\int_{\Omega^1}\omega^1\nabla_j\overline{\zeta}
dx=\int_{\Omega}\omega\nabla_j\overline{\zeta}
dx=\int_{\Gamma}\omega\overline{\zeta}\nu_jd\Gamma-\int_{\Omega}(\nabla_j\omega)
\overline{\zeta}
dx\\
\\
&=&\int_{\Gamma\cap\mathbb{B}}\omega\overline{\zeta}\nu_jd\Gamma-\int_{\Omega}(\nabla_j\omega)
\overline{\zeta} dx=-\int_\Omega(\nabla_j\omega)\overline{\zeta}
dx=-\int_{\Omega^1}\omega^1_j\overline{\zeta} dx,
\end{array}$$
and $$\begin{array}{lll}
&&\int_{\Omega^1}\omega^1_j\nabla_k\overline{\zeta}
dx=\int_{\Omega}\nabla_j\omega\nabla_k\overline{\zeta}
dx=\int_{\Gamma}\nabla_j\omega\overline{\zeta}\nu_kd\Gamma-\int_{\Omega}(\nabla_k\nabla_j\omega)
\overline{\zeta} dx\\
\\
&=&\int_{\Gamma\cap\mathbb{B}}\nabla_j\omega\overline{\zeta}\nu_kd\Gamma-\int_{\Omega}(\nabla_k\nabla_j\omega)
\overline{\zeta}
dx=-\int_\Omega(\nabla_k\nabla_j\omega)\overline{\zeta}
dx=-\int_{\Omega^1}\omega^1_{jk}\overline{\zeta} dx.
\end{array}$$
Thus, the result is concluded due to the connectness of $\Omega$.

\section*{Acknowledgement} This project is partially supported by Natural Science
Foundation of Jiangsu Province (BK 20130598), National Natural
Science Foundation of China (NSFC 71273048, 71473036, 11471072), the
Scientific Research Foundation for the Returned Overseas Chinese
Scholars, Open Research Fund Program of Jiangsu Key Laboratory of
Engineering Mechanics, Southeast University (LEM16B06), Fundamental
Research Funds for the Central Universities on the Field Research of
Commercialization of Marriage between China and Vietnam (No.
2014B15214). In particular, the authors also express their deep
gratitude to the referees for their careful reading and useful
remarks.

\end{document}